\documentclass[%
 reprint,
 amsmath,amssymb,
aip,jcp]{revtex4-2}

\usepackage{graphicx}
\usepackage{dcolumn}
\usepackage{bm}
\usepackage{hyperref}

\usepackage{tikz}
\usetikzlibrary{arrows.meta}

\usepackage{amsthm}

\newtheorem{theorem}{Theorem}

\theoremstyle{definition}

\theoremstyle{remark}

\usepackage[T1]{fontenc}

\usepackage{lineno}

\begin{document}


\title{Geometric Thermodynamics of Cycles: Curvature and Local Thermodynamic Response}

\author{Eric R. Bittner}
\affiliation{Department of Physics, University of Houston}

\date{\today}

\begin{abstract}
Classical thermodynamics contains familiar geometric relations associated with cyclic processes, most notably the identification of mechanical work with the area enclosed by a trajectory in the $(P,V)$ plane. We show that the area laws for work and reversible heat arise as projections of a single canonical two--form defined on the equilibrium thermodynamic manifold, providing a unified description of thermodynamic cycles in both the $(P,V)$ and $(T,S)$ representations. The same structure yields a direct link between cycle geometry and thermodynamic response: the work generated by infinitesimal cycles is set locally by the mixed curvature $U_{SV}$ of the equilibrium energy surface, which can be expressed in terms of measurable susceptibilities. This identifies thermodynamic work as a local geometric field over state space rather than solely a global property of cyclic processes. More broadly, the framework connects classical cycle geometry to stochastic thermodynamic trajectories, providing a geometric interpretation of nonequilibrium work relations such as the Jarzynski equality.
\end{abstract}
\maketitle

\section{Introduction}

Many central relations of thermodynamics admit a geometric
interpretation as statements about curves and surfaces in state space.
The most familiar example is the relation between mechanical work and
the area enclosed by a cycle in the $(P,V)$ plane,
\begin{equation}
W = \oint P\,dV ,
\end{equation}
which identifies the work performed by the system during a thermodynamic
cycle with the area enclosed by the corresponding trajectory.\footnote{Throughout
this work we consider quasistatic transformations and adopt the convention
that work is done by the system, so that the reversible work differential
takes the form $\delta W = P\,dV$. All cyclic integrals are taken with
counterclockwise (positive) orientation in the relevant coordinate plane.}
In the $(T,S)$ representation, the local geometry of a thermodynamic
trajectory provides a complementary interpretation: the tangent vector
decomposes the motion into adiabatic and heat-exchanging components. Its
projection along the entropy axis determines the reversible heat flow
$\delta Q = T\,dS$, while motion along the temperature axis corresponds
to adiabatic variation. These constructions are standard elements of
thermodynamic analysis and appear in many treatments of the subject,
notably in the classic text of Callen~\cite{Callen1985}. They are also
naturally expressed in the language of differential forms: work and heat
appear as line integrals over closed curves, while the associated area
laws follow from the corresponding surface integrals in state space.
This geometric viewpoint has been emphasized in modern treatments of
differential forms, such as the work of Needham~\cite{Needham2021}, where
line and surface integrals are understood directly as oriented geometric
objects.

Despite the familiarity of these constructions, the underlying
geometric structure of classical thermodynamics is seldom stated
explicitly. The First Law,
\begin{equation}
dU = T\,dS - P\,dV ,
\end{equation}
defines a differential relation among the thermodynamic variables
$(U,S,V,T,P)$ and thereby endows thermodynamic state space with a
natural geometric structure. The familiar area relations of
thermodynamic cycles arise as projections of this structure onto
specific coordinate planes. Figure~\ref{fig:thermo-geometry}
illustrates this correspondence: a cycle in the $(P,V)$ plane maps to
a corresponding curve in the $(T,S)$ plane, representing the same
thermodynamic process in a different coordinate chart. The two area
laws are therefore not independent constructions, but different
representations of the same underlying geometric content.

\begin{figure*}[t]
\centering
\includegraphics[width=0.95\textwidth]{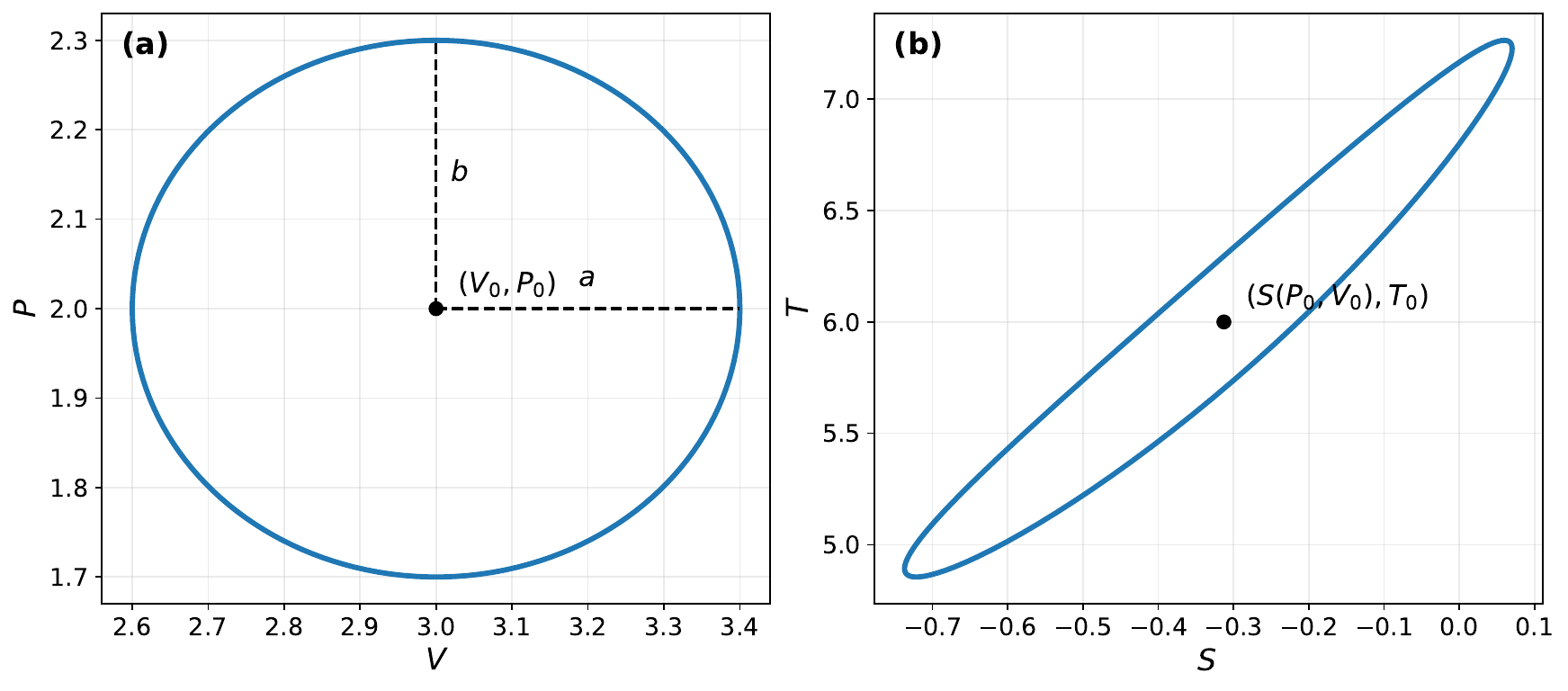}
\caption{
\textbf{Local mapping of a quasistatic cycle from the $(P,V)$ plane to the $(T,S)$ plane for an ideal gas.}
(a) An elliptical cycle in the $(P,V)$ plane centered at $(V_0,P_0)$ with
semi-axes $a$ and $b$.
(b) The corresponding image of the same cycle in the $(T,S)$ plane under the
exact ideal-gas mapping. The black points indicate the reference state
$(V_0,P_0)$ and its image $(S(P_0,V_0),T_0)$. The nonlinear transformation
from $(P,V)$ to $(T,S)$ does not preserve geometric centers, so the image of
the center state need not coincide with the geometric center of the mapped
curve. Parameters used are $P_0=2$, $V_0=3$, $a=0.4$, $b=0.3$, $n=1$,
$C_V=3/2$, and units are chosen such that $R=1$.
}
\label{fig:thermo-geometry}
\end{figure*}

Geometric approaches to thermodynamics have developed along several
lines. Weinhold and Ruppeiner introduced metric structures on
thermodynamic state space and related them to equilibrium fluctuations
and phase behavior~\cite{Weinhold1975,Ruppeiner1995}. A complementary
approach based on contact geometry treats the First Law as a contact
one-form and identifies equilibrium states with Legendre submanifolds
of thermodynamic phase space~\cite{Hermann1973,Mrugala1991,Mrugala2000,Bravetti2019,vanderSchaft2018}. Geometric ideas also appear in
nonequilibrium statistical mechanics through fluctuation relations such
as the Jarzynski equality and the Crooks theorem, which relate
stochastic work to equilibrium free-energy
differences~\cite{Jarzynski1997,Crooks1999}, while extensions of
contact geometry to mesoscopic and nonequilibrium dynamics provide a
framework for irreversible evolution and model
reduction~\cite{Grmela2014}. Although the present analysis is formulated
within classical equilibrium thermodynamics, it connects naturally to
modern treatments of work and heat as trajectory-dependent quantities
in driven systems and stochastic
dynamics~\cite{Jarzynski1997,Crooks1999,Seifert2012}. From this
perspective, classical cycle relations and nonequilibrium work
functionals appear as different manifestations of a common geometric
structure.

The present work is closest in spirit to the contact-geometric
viewpoint, but emphasizes a complementary and more concrete result: the
area laws for work and reversible heat arise as symplectic projections
of the underlying contact structure. In particular, we show that these
relations follow from a single canonical two-form defined on the
equilibrium thermodynamic manifold, and that the same object gives a
local description of thermodynamic response. In the energy
representation $U(S,V)$, the mixed derivative $U_{SV}$ plays the role
of a curvature density controlling the work generated by infinitesimal
cycles, and this quantity may be expressed directly in terms of
measurable susceptibilities. The analysis therefore identifies
thermodynamic work not only as a global property of cyclic processes
but as a local geometric field defined over equilibrium state space.

The paper develops this argument in three steps. We first review the
geometric interpretation of thermodynamic cycles in the $(P,V)$ and
$(T,S)$ planes and show how the two descriptions are related. We then
derive the corresponding area laws from the contact structure defined
by the First Law and from the geometry of the equilibrium surface
$U(S,V)$, establishing that work and reversible heat are projections of
a single canonical two-form. As shown in
Sec.~\ref{sec:response_geometry}, this structure also yields a direct
connection between cycle geometry and measurable response functions.
Finally, we connect the same geometric framework to fluctuation
relations, where thermodynamic work appears as a stochastic functional
of system trajectories. The aim is therefore not merely to restate
familiar area laws in new notation, but to identify the intrinsic
geometric structure that unifies classical cycle analysis,
thermodynamic response, and nonequilibrium work relations.
\section{Theoretical Framework}
\label{sec:II}

We now develop an explicit geometric formulation of thermodynamic cycles.
Classical cycle descriptions in different coordinate representations,
such as $(P,V)$ and $(T,S)$, are understood as coordinate charts on a
common equilibrium manifold defined by the First Law. Within this
framework, work and reversible heat emerge as projections of a single
underlying geometric structure. We show that both are governed by a
canonical two-form on the equilibrium manifold, whose flux determines
the work generated by cyclic processes. This formulation unifies the
standard area laws of thermodynamics and provides a direct link between
cycle geometry and thermodynamic response.

\subsection{Contact and Symplectic Structure of Thermodynamic Cycles}
\label{sec:IIa}

Classical thermodynamics admits a natural geometric formulation in terms of
contact manifolds\cite{Grmela2014}. For a simple compressible system, the
thermodynamic state space is described by coordinates
\[
(U,S,V,T,P),
\]
with contact one--form
\begin{equation}
\alpha = dU - T\,dS + P\,dV .
\end{equation}

Equilibrium states lie on Legendre submanifolds defined by the contact
condition $\alpha = 0$, which gives the First Law in differential form,
\begin{equation}
dU = T\,dS - P\,dV .
\end{equation}
In the energy representation, these states define a surface $U(S,V)$
embedded in $(S,V,U)$ space. The intensive variables follow from its
slopes,
\[
T = \left(\frac{\partial U}{\partial S}\right)_V,
\qquad
P = -\left(\frac{\partial U}{\partial V}\right)_S ,
\]
so that the tangent plane satisfies the First Law. A thermodynamic
process traces a trajectory on this surface; a cycle is a closed curve,
as shown in Fig.~\ref{fig:energy-surface}.

\begin{figure}[t]
\includegraphics[width=\columnwidth]{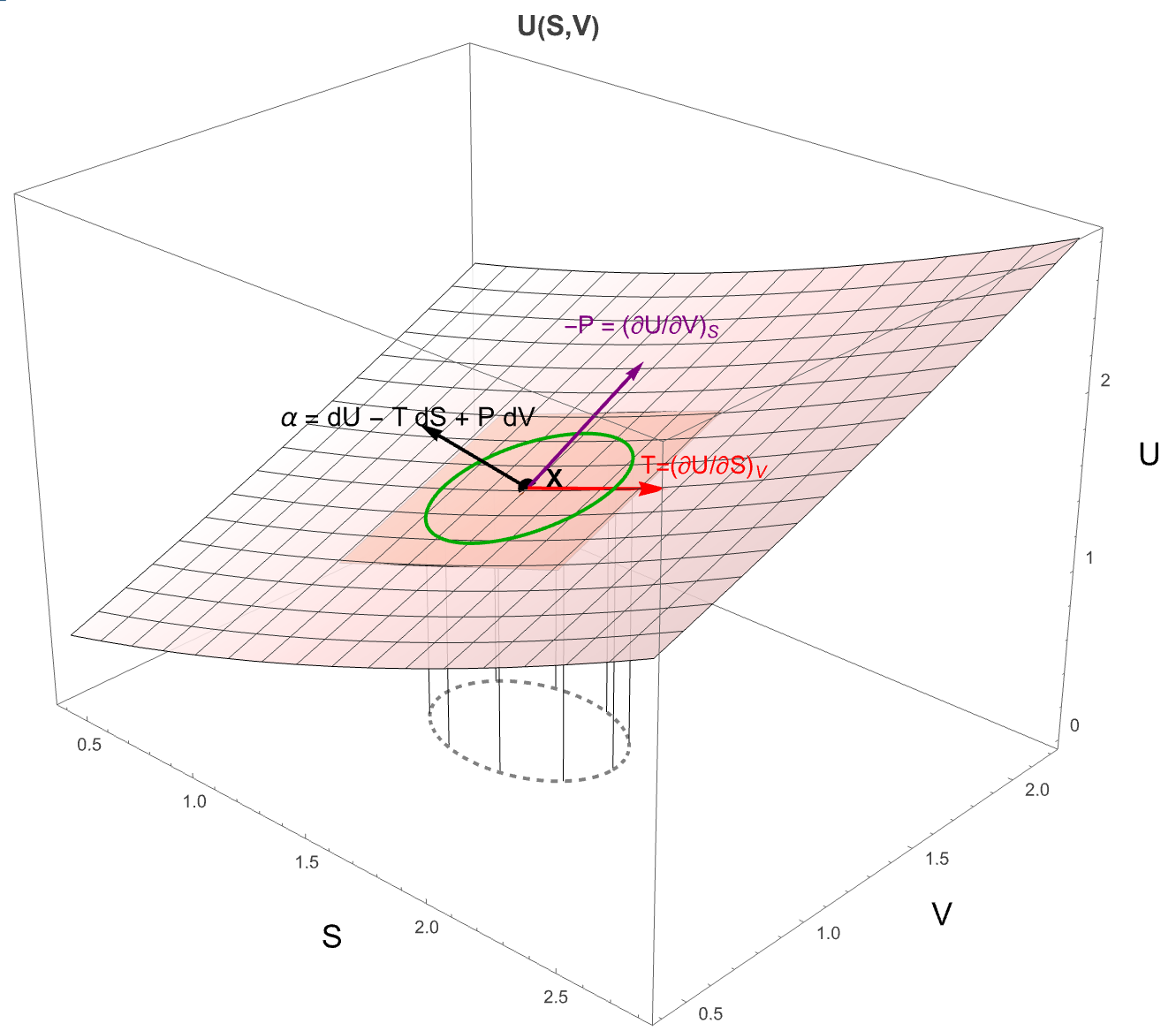}
\caption{
Geometric structure of thermodynamics in the energy representation.
The equilibrium states form a surface $U(S,V)$, whose tangent plane
encodes the First Law. Thermodynamic cycles correspond to closed curves
on this surface, and their projections onto lower-dimensional planes
define the geometric structure of work and heat. The projected area on
the $SV$ plane corresponds to the canonical two-form
$\Omega = -U_{SV} dS \wedge dV$, which gives the geometric origin of
thermodynamic work relations.
}
\label{fig:energy-surface}
\end{figure}

The full thermodynamic structure is contact, but many familiar relations
arise from projections onto two--dimensional subspaces that carry a
natural symplectic structure. In the $(P,V)$ and $(T,S)$ representations,
the two--forms
\begin{equation}
\omega_{PV} = dP \wedge dV,
\qquad
\omega_{TS} = dT \wedge dS
\end{equation}
define the area elements associated with mechanical work and reversible
heat exchange.

For a quasistatic cycle $\gamma$ in the $(P,V)$ plane, Stokes' theorem gives
\begin{equation}
W = \oint_\gamma P\,dV = \iint_{\Sigma} dP \wedge dV ,
\end{equation}
so that the work equals the symplectic area enclosed by the trajectory.
In the $(T,S)$ representation, the heat exchanged along isothermal
segments is $Q = \int T\,dS$, and reversible cycles appear as rectangular
regions whose area gives the heat transfer. These relations follow as
projections of the underlying contact structure onto symplectic
subspaces, where area integrals represent fluxes of differential
two--forms~\cite{Needham2021}.
\subsection{Local Geometry of Heat Flow}\label{sec:IIb}

The symplectic area relation for work arises from the projection of the
thermodynamic contact manifold onto the $(P,V)$ plane. To illustrate
this structure, consider a small cycle parameterized by
\begin{align}
V &= V_0 + a \cos\theta , \\
P &= P_0 + b \sin\theta .
\end{align}
The work performed over one cycle is
\begin{equation}
W = \oint P\,dV = \pi ab ,
\end{equation}
which equals the area enclosed by the ellipse. This is the local
symplectic area associated with the projection onto the $(P,V)$ plane.

A complementary structure appears in the heat differential. For a simple
compressible system, the First Law gives
\begin{equation}
\delta Q = dU + P\,dV .
\end{equation}
For an ideal gas, $U$ depends only on temperature, and using the equation
of state $PV = nRT$ yields
\begin{equation}
\delta Q = \frac{C_P}{R}P\,dV + \frac{C_V}{R}V\,dP .
\end{equation}

Near a reference state $(P_0,V_0)$ this expression linearizes to
\begin{equation}
\delta Q \approx A\,dV + B\,dP
\label{eq:11}
\end{equation}
with
\[
A = \frac{C_P}{R}P_0,
\qquad
B = \frac{C_V}{R}V_0 .
\]
This form defines a preferred local direction in the $(P,V)$ plane.
Along trajectories satisfying
\begin{equation}
A\,dV + B\,dP > 0 ,
\end{equation}
heat flows into the system, while the orthogonal direction
\begin{equation}
A\,dV + B\,dP = 0
\end{equation}
defines local adiabatic motion. The linearized heat form therefore
selects a pair of directions: one associated with heat flow and the
other with adiabatic trajectories. This structure is shown in
Fig.~\ref{fig:thermo-geometry}(a).

This construction defines a local change of coordinates in state space.
One coordinate aligns with heat flow, while the conjugate direction
aligns with adiabatic motion. In these coordinates, the heat differential
takes the canonical form
\begin{equation}
\delta Q = T\,dS .
\end{equation}
Entropy therefore appears as the coordinate associated with heat flow,
with temperature as the corresponding integrating factor.

The mapping from $(P,V)$ to $(T,S)$ may be made explicit for the same
small cycle. Linearizing about $(P_0,V_0)$ gives
\begin{equation}
T(\theta)\approx T_0+\frac{P_0 a}{nR}\cos\theta+\frac{V_0 b}{nR}\sin\theta ,
\end{equation}
and
\begin{equation}
S(\theta)\approx S_0+\frac{nC_P a}{V_0}\cos\theta+\frac{nC_V b}{P_0}\sin\theta .
\end{equation}
An elliptical cycle in the $(P,V)$ plane therefore maps to a rotated
ellipse in the $(T,S)$ plane, as shown in
Fig.~\ref{fig:thermo-geometry}(b). The two curves represent the same
thermodynamic cycle in different local coordinate charts.

The image of a small cycle in the $(T,S)$ plane is generically a smooth
closed curve determined by the local response coefficients. The familiar
rectangular geometry arises only when the cycle is composed of exact
isothermal and adiabatic segments, as in the Carnot construction. We
turn to this case in the next section.
\subsection{Geometric Origin of Thermodynamic Area Laws}
\label{sec:IIe}

The classical area relations for work and reversible heat are typically
introduced in different coordinate representations. Mechanical work is
expressed as an area in the $(P,V)$ plane, while reversible heat is
expressed as an area in the $(T,S)$ plane. Although these constructions are closely related through the structure
of equilibrium thermodynamics, their common geometric origin is rarely
formulated in terms of a single intrinsic object. On the equilibrium
manifold, both area laws arise from a canonical two--form, as made
precise below.

\begin{theorem}[Canonical thermodynamic two--form]
\label{theorem:1}
Let $(\mathcal T,\alpha)$ be the thermodynamic contact manifold of a
simple compressible system with contact form
\[
\alpha = dU - T\,dS + P\,dV .
\]
Let $\iota:\mathcal E \hookrightarrow \mathcal T$ denote the inclusion
of the equilibrium Legendre submanifold $\mathcal E$ satisfying
$\alpha = 0$. Then
\[
\iota^*(dP\wedge dV)=\iota^*(dT\wedge dS)=:\Omega .
\]
In the energy representation $U=U(S,V)$,
\[
\Omega = -U_{SV}\, dS\wedge dV,
\]
where
\[
U_{SV} \equiv \frac{\partial^2 U}{\partial S\,\partial V},
\]
denotes the mixed second derivative of the energy surface.
For any reversible cycle $C$ bounding a surface $\Sigma \subset \mathcal E$,
\[
\oint_C P\,dV
=
\int_\Sigma \Omega
=
\oint_C T\,dS .
\]
\end{theorem}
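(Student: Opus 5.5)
The plan is to derive everything from the single identity $\iota^*d\alpha = 0$. Since $\alpha$ vanishes identically on $\mathcal E$, we have $\iota^*\alpha = 0$. Pullback commutes with the exterior derivative, so
\[
0 = d(\iota^*\alpha) = \iota^*(d\alpha).
\]
A direct computation gives $d\alpha = -dT\wedge dS + dP\wedge dV$, using $d(dU)=0$. Hence $\iota^*(dP\wedge dV) = \iota^*(dT\wedge dS)$, and we define this common two-form to be $\Omega$. This first equality is purely formal. Its only requirement is that $\mathcal E$ be a genuine integral submanifold of $\alpha$, meaning $\iota^*\alpha=0$ holds as a form and not merely pointwise along curves.

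Next I would compute $\Omega$ in the energy chart. I parametrize $\mathcal E$ by $(S,V)$ through $U=U(S,V)$, $T=U_S$, $P=-U_V$; the contact condition is exactly what forces these slope relations. Then $dT = U_{SS}\,dS + U_{SV}\,dV$, so
\[
\iota^*(dT\wedge dS) = U_{SV}\,dV\wedge dS = -U_{SV}\,dS\wedge dV.
\]
As a consistency check, $dP = -U_{VS}\,dS - U_{VV}\,dV$ gives $\iota^*(dP\wedge dV) = -U_{SV}\,dS\wedge dV$. The two expressions agree because mixed partials commute, which assumes $U$ is $C^2$. This step gives a second, chart-level proof of the equality in the first step and makes explicit that Schwarz's theorem is the content of $d^2=0$ here.

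For the integral statement I apply Stokes' theorem on $\mathcal E$ to the pulled-back one-forms. We have $d(\iota^*(P\,dV)) = \iota^*(dP\wedge dV) = \Omega$ and $d(\iota^*(T\,dS)) = \iota^*(dT\wedge dS) = \Omega$. So for $C=\partial\Sigma$ with $\Sigma\subset\mathcal E$,
\[
\oint_C P\,dV = \int_\Sigma \Omega = \oint_C T\,dS.
\]
Alternatively, $\iota^*(T\,dS - P\,dV) = \iota^*dU$ is exact, so its loop integral vanishes. That gives the outer equality directly, consistent with the First Law for cycles, $\oint \delta Q = \oint \delta W$.

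The main obstacle is orientation and sign bookkeeping rather than anything deep. The statement $\oint P\,dV = \int_\Sigma dP\wedge dV$ carries a sign that depends on the orientation convention. With $V$ as the horizontal axis, $d(P\,dV) = dP\wedge dV = -dV\wedge dP$, so the clockwise-in-$(V,P)$ versus counterclockwise footnote convention must be fixed consistently. I would resolve this by orienting $\Sigma$ in the $(S,V)$ chart and letting $C$ carry the induced boundary orientation; the identities above are then exact, with no appeal to planar pictures. Finally, I would note the standing assumptions explicitly: $U\in C^2$, $\Sigma$ a compact oriented surface in $\mathcal E$, and $C$ traversed quasistatically, so that the curve lies in $\mathcal E$.
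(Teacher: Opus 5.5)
Your proposal is correct and follows the paper's proof essentially step for step: you pull back $d\alpha=-dT\wedge dS+dP\wedge dV$ to get $\iota^*(dP\wedge dV)=\iota^*(dT\wedge dS)$, compute both wedge products in the $(S,V)$ chart to obtain $-U_{SV}\,dS\wedge dV$, and finish with Stokes' theorem. Your additions are sound refinements that the paper leaves implicit: the exactness of $\iota^*dU$ gives the outer equality directly, and giving $C$ the induced boundary orientation of $\Sigma$ settles the sign convention.
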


This result identifies thermodynamic work as the flux of a curvature
field on the equilibrium manifold, thereby elevating cycle work from a
global area property to a local geometric density.

\begin{proof}
On $\mathcal E$, $\iota^*\alpha=0$, so $\iota^*(d\alpha)=0$. Since
\[
d\alpha = -dT\wedge dS + dP\wedge dV ,
\]
it follows that
\[
\iota^*(dP\wedge dV)=\iota^*(dT\wedge dS).
\]

In the energy representation,
\[
T = U_S, \qquad P = -U_V ,
\]
so that
\[
dT = U_{SS}\,dS + U_{SV}\,dV,
\qquad
dP = -U_{SV}\,dS - U_{VV}\,dV .
\]
Taking exterior products gives
\[
dT\wedge dS = -U_{SV}\,dS\wedge dV,
\qquad
dP\wedge dV = -U_{SV}\,dS\wedge dV .
\]
Thus both pullbacks reduce to
\[
\Omega = -U_{SV}\,dS\wedge dV ,
\]
and the cycle relation follows from Stokes' theorem.
\end{proof}

The theorem shows that the two apparent area laws are not independent,
but represent projections of a single intrinsic two--form. The
distinction between the $(P,V)$ and $(T,S)$ representations therefore
reflects only the choice of coordinates on the equilibrium manifold.

In local coordinates $(S,V)$,
\[
\Omega = -U_{SV}\,dS\wedge dV ,
\]
so the mixed derivative $U_{SV}$ defines a curvature density governing
the work generated by infinitesimal cycles. For a small reversible loop
spanning $\delta S\,\delta V$,
\[
\delta W \approx -U_{SV}\,\delta S\,\delta V ,
\]
showing that the magnitude of $U_{SV}$ controls the local work response.
Regions where $|U_{SV}|$ is large produce stronger work, while points
where $U_{SV}=0$ correspond to local decoupling of entropy and volume.

The equality of the projected area forms encodes the Maxwell relation
\[
\left(\frac{\partial T}{\partial V}\right)_S
=
-\left(\frac{\partial P}{\partial S}\right)_V ,
\]
which expresses the integrability of the underlying structure. A
thermodynamic cycle is therefore a closed curve on $U(S,V)$ whose
projected area is measured by $\Omega$, linking the geometry of cycles
directly to thermodynamic response.
\subsection{Fluctuating Thermodynamic Trajectories}
\label{sec:IIf}

In driven systems, thermodynamic trajectories fluctuate due to
microscopic degrees of freedom, and the work performed during a process
becomes a stochastic quantity.

Consider a protocol in which an external parameter $\lambda(t)$ drives
the system between equilibrium states. The work along a trajectory
$\gamma$ is the path functional
\begin{equation}
W[\gamma] = \int_0^\tau
\frac{\partial H}{\partial \lambda}\,
\dot{\lambda}(t)\,dt .
\end{equation}
When $\lambda = V$, the generalized force reduces to $-P$, and this
expression becomes the familiar work integral
\begin{equation}
W[\gamma] = \int_{\gamma} P\,dV .
\end{equation}
Each realization of the protocol therefore defines a trajectory in
thermodynamic state space carrying a fluctuating work functional.

A central result of nonequilibrium statistical mechanics is the
Jarzynski equality~\cite{Jarzynski1997},
\begin{equation}
\left\langle e^{-\beta W} \right\rangle
=
e^{-\beta \Delta F},
\end{equation}
which relates the work distribution of a driven process to the
equilibrium free--energy difference.

Within the present framework, $W[\gamma]$ may be interpreted as a
geometric action accumulated along the trajectory. For closed cycles,
or for protocols completed by reference paths on the equilibrium
manifold, the work reduces to a projected area,
\begin{equation}
W[\gamma] = \oint P\,dV
           = \iint_{\Sigma} dP \wedge dV .
\end{equation}
The Jarzynski average may then be viewed as an ensemble average over
fluctuating oriented areas in the $(P,V)$ plane,
\begin{equation}
\left\langle
\exp\!\left(
-\beta \!\iint_{\Sigma(\gamma)} dP \wedge dV
\right)
\right\rangle
=
e^{-\beta \Delta F}.
\end{equation}

Classical area laws therefore emerge as the deterministic limit of a
statistical description in which thermodynamic processes correspond to
ensembles of fluctuating trajectories. The geometric structure of
equilibrium thermodynamics extends naturally to this stochastic setting.

The same structure also gives a geometric interpretation of the
Clausius inequality. Reversible processes lie on the equilibrium
submanifold where the contact form $\alpha = dU - T\,dS + P\,dV$
vanishes. Irreversible trajectories depart from this manifold and
produce a nonzero integral of $\alpha$. Over a closed cycle this yields
\[
\oint \frac{\delta Q}{T} \le 0 ,
\]
which expresses the failure of irreversible trajectories to preserve
the underlying geometric structure.
\subsection{Thermodynamic Response and Local Cycle Geometry}
\label{sec:response_geometry}

The canonical two--form $\Omega$ defines the local work generated by
infinitesimal thermodynamic cycles. In the energy representation
$U(S,V)$, it takes the form
\begin{equation}
\Omega = -U_{SV}\, dS \wedge dV ,
\end{equation}
so that the work generated by a small reversible cycle is determined by
the mixed derivative $U_{SV}$ of the equilibrium energy surface.

This quantity admits a direct thermodynamic interpretation. By the
Maxwell relation,
\begin{equation}
U_{SV}
=
\left(\frac{\partial T}{\partial V}\right)_S ,
\end{equation}
so the local cycle geometry is controlled by the response of temperature
to volume changes at fixed entropy.

Expressing this derivative in terms of measurable response functions
gives
\begin{equation}
U_{SV}
=
\frac{T\alpha}{C_V \kappa_T},
\end{equation}
where $\alpha$ is the thermal expansion coefficient,
$\kappa_T$ the isothermal compressibility, and $C_V$ the heat capacity at
constant volume. The curvature field governing cyclic work is therefore
determined directly by thermodynamic susceptibilities.

For an infinitesimal reversible cycle spanning an oriented area
$\mathrm{Area}(\Sigma)$ in the $(S,V)$ plane, Theorem~\ref{theorem:1}
yields
\begin{equation}
W \approx -U_{SV}\,\mathrm{Area}(\Sigma).
\end{equation}
This identifies $U_{SV}$ as a local work density: thermodynamic systems
admit a field over state space whose magnitude sets the work generated
by small cycles, while the sign determines the orientation of the
response.

Regions of state space with large thermal expansion or low
compressibility correspond to enhanced curvature and therefore larger
work generation, while loci where $U_{SV}$ vanishes correspond to local
decoupling of entropy and volume variations and suppressed cycle
response.

This local geometric picture also connects naturally to nonequilibrium
fluctuation relations. When thermodynamic trajectories fluctuate due to
microscopic degrees of freedom, the work becomes a stochastic path
functional. In this setting, the Jarzynski equality relates equilibrium
free-energy differences to an exponential average over such
trajectories. Within the present framework, this average may be viewed
as sampling a distribution of geometric actions, with the classical
area law emerging as the deterministic limit. The scalar field $U_{SV}$
thus plays a dual role: it governs the local generation of work in
reversible cycles and sets the scale for geometric contributions to
work fluctuations away from equilibrium.

\section{Conclusions}

This work identifies a geometric structure underlying classical
thermodynamics. The familiar relations connecting mechanical work and
reversible heat to areas enclosed by cycles in the $(P,V)$ and $(T,S)$
planes arise as projections of a single canonical construction defined
by the First Law. On the equilibrium manifold, the contact form
$\alpha = dU - T\,dS + P\,dV$ induces a two--form whose projections
generate the classical area laws, so that the $(P,V)$ and $(T,S)$
representations are not independent, but distinct coordinate
expressions of the same intrinsic geometric object.

The central result is that thermodynamic work admits a local geometric
description. In the energy representation,
\[
\Omega = -U_{SV}\, dS \wedge dV,
\]
so that the work generated by an infinitesimal reversible cycle is
determined by the mixed derivative $U_{SV}$. This quantity admits a
direct expression in terms of measurable response functions,
\[
U_{SV} = \frac{T\alpha}{C_V \kappa_T},
\]
establishing a direct link between thermodynamic susceptibilities and
cyclic work.

Thermodynamic systems therefore admit a local work density defined over
the equilibrium manifold, and reversible cycles sample this field
through their enclosed geometry. Work is thus elevated from a global
property of cycles to a local geometric field that encodes the
work-generating capacity of thermodynamic states.

This geometric structure also connects naturally to nonequilibrium
settings. When thermodynamic trajectories fluctuate, work becomes a
stochastic path functional, and fluctuation relations such as the
Jarzynski equality may be interpreted as averages over fluctuating
geometric actions. The classical area laws then emerge as the
deterministic limit of this broader statistical description.

Beyond its conceptual significance, the present formulation provides a
practical computational framework. Given an equation of state or a
free-energy surface, the scalar field $U_{SV}$, or equivalently
$T\alpha/(C_V\kappa_T)$, defines a local cycle-response map that
identifies regions of state space where small cycles generate enhanced
work, providing a basis for the analysis and geometric optimization of
thermodynamic protocols.

Extensions of this framework to nonequilibrium and open systems, where
the effective curvature need not be positive-definite, may lead to
qualitatively new geometric effects.

\appendix
\section{Mapping a Quasistatic Cycle from $(P,V)$ to $(T,S)$}
\label{app:PV_to_TS}

The geometric discussion in the main text may be made constructive by
describing how a quasistatic thermodynamic cycle specified in the
$(P,V)$ plane is mapped into the $(T,S)$ plane.  This procedure makes
explicit how a closed trajectory in one thermodynamic chart may be
represented in another, and clarifies why the image of a generic small
cycle in the $(T,S)$ plane is not generally a rectangle.

Let a closed quasistatic cycle be parameterized by a variable
$\lambda \in [0,2\pi]$,
\begin{equation}
P = P(\lambda), \qquad V = V(\lambda),
\end{equation}
with
\begin{equation}
P(0) = P(2\pi), \qquad V(0) = V(2\pi).
\end{equation}
The temperature along the path is determined by the equation of state,
\begin{equation}
T(\lambda) = T\big(P(\lambda),V(\lambda)\big).
\end{equation}
For example, for an ideal gas one has
\begin{equation}
T(\lambda) = \frac{P(\lambda)V(\lambda)}{nR}.
\end{equation}

To construct the entropy coordinate along the cycle, one uses the
thermodynamic identity
\begin{equation}
dS
=
\left(\frac{\partial S}{\partial T}\right)_V dT
+
\left(\frac{\partial S}{\partial V}\right)_T dV .
\end{equation}
Using the standard relations
\begin{equation}
\left(\frac{\partial S}{\partial T}\right)_V = \frac{C_V}{T},
\qquad
\left(\frac{\partial S}{\partial V}\right)_T
=
\left(\frac{\partial P}{\partial T}\right)_V ,
\end{equation}
this becomes
\begin{equation}
dS = \frac{C_V}{T}\,dT
+ \left(\frac{\partial P}{\partial T}\right)_V dV .
\label{eq:app_dS_general}
\end{equation}
Hence, along the parameterized path,
\begin{equation}
\frac{dS}{d\lambda}
=
\frac{C_V}{T(\lambda)}\frac{dT}{d\lambda}
+
\left(\frac{\partial P}{\partial T}\right)_V
\frac{dV}{d\lambda}.
\label{eq:app_dS_dlambda}
\end{equation}

The entropy coordinate is therefore obtained as the cumulative integral
along the path,
\begin{equation}
S(\lambda)
=
S(0)
+
\int_0^\lambda
\left[
\frac{C_V}{T(\lambda')}\frac{dT}{d\lambda'}
+
\left(\frac{\partial P}{\partial T}\right)_V
\frac{dV}{d\lambda'}
\right] d\lambda' .
\label{eq:app_Slambda}
\end{equation}
The image of the original $(P,V)$ cycle in the $(T,S)$ plane is then
the parametric curve
\begin{equation}
\lambda \mapsto \big(T(\lambda),S(\lambda)\big).
\end{equation}

Because entropy is a state function, a reversible closed cycle satisfies
\begin{equation}
\oint dS = 0,
\end{equation}
so that
\begin{equation}
S(2\pi) = S(0).
\end{equation}
Thus the mapped trajectory closes in the $(T,S)$ plane.  The vanishing
of the full-cycle integral, however, only expresses closure of the
curve.  The shape of the image is determined by the running value
$S(\lambda)$ obtained from the partial integral in
Eq.~\eqref{eq:app_Slambda}.

This construction is completely general for quasistatic cycles once the
equation of state and the relevant thermodynamic response functions are
specified.  In particular, it shows that a generic closed curve in the
$(P,V)$ plane maps to a corresponding closed curve in the $(T,S)$ plane
whose shape depends on the local thermodynamic response of the system.
The familiar rectangular geometry arises only in the special case where
the cycle is explicitly constructed from exact isothermal and adiabatic
branches, namely the Carnot cycle.

For the local elliptical cycle considered in
Sec.~\ref{sec:IIb},
\begin{equation}
V(\lambda)=V_0+a\cos\lambda,\qquad
P(\lambda)=P_0+b\sin\lambda ,
\end{equation}
the linearized mapping derived in the main text gives
\begin{equation}
T(\lambda)\approx
T_0+\frac{P_0 a}{nR}\cos\lambda
+\frac{V_0 b}{nR}\sin\lambda ,
\end{equation}
and
\begin{equation}
S(\lambda)\approx
S_0+\frac{nC_P a}{V_0}\cos\lambda
+\frac{nC_V b}{P_0}\sin\lambda .
\end{equation}
Thus, in the local linearized approximation, the image of the
elliptical cycle in the $(P,V)$ plane is a rotated ellipse in the
$(T,S)$ plane.  This provides the local geometric interpretation of the
mapping illustrated in Fig.~\ref{fig:thermo-geometry}(b).

\vspace{0.5cm}

\section*{Data Availability Statement}
All data generated or analyzed during this study are included in this manuscript.

\begin{acknowledgments}
This work at the University of Houston was supported by the National Science Foundation under CHE-2404788 and the Robert A. Welch Foundation (E-1337).
\end{acknowledgments}

\section*{Author Contributions}
The author developed the theoretical framework, performed all derivations,
and carried out the analysis presented in this work. Generative AI tools
were used during manuscript preparation to assist with drafting,
structural organization, and verification of intermediate algebraic
steps. All theoretical formulations, results, and physical
interpretations were independently developed and validated by the author.

\section*{Conflicts of Interest}
The author declares no competing financial or non-financial interests.

\bibliographystyle{apsrev4-2}
\bibliography{refs-local}

\end{document}